\documentclass[runningheads]{llncs}

\usepackage[T1]{fontenc}
\usepackage[utf8]{inputenc}
\usepackage{amsmath,amssymb,mathtools}
\usepackage{algorithm}
\usepackage{algpseudocode}
\usepackage{booktabs}
\usepackage{graphicx}
\usepackage{multirow}
\usepackage{enumitem}
\usepackage{url}
\usepackage{placeins}
\usepackage{xcolor}
\usepackage{tikz}
\usepackage[hidelinks]{hyperref}

\graphicspath{{figures/}}
\newcommand{\Reg}{\ifmmode\operatorname{Regular}\else\textnormal{\textsc{Regular}}\fi}
\newcommand{\DFA}{\mathcal{A}}
\newcommand{\States}{Q}
\newcommand{\Alphabet}{\Sigma}
\newcommand{\Trans}{\delta}
\newcommand{\Init}{q_0}
\newcommand{\Acc}{F}
\newcommand{\calL}{\mathcal{L}}

\newcommand{\wgt}{w}
\newcommand{\img}{\mathsf{Img}}
\newcommand{\INF}{+\infty}

\definecolor{orcidgreen}{HTML}{A6CE39}
\newcommand{\orcidbadge}[1]{%
  \href{https://orcid.org/#1}{%
    \raisebox{0.35ex}{%
      \tikz[baseline=-0.55ex]{%
        \node[circle,fill=orcidgreen,inner sep=0pt,minimum size=1.65ex]
          {\color{white}\sffamily\fontsize{4.2}{4.2}\selectfont iD};}}}}
\renewcommand{\orcidID}[1]{\unskip\,\orcidbadge{#1}}
\newcommand{\equalcontrib}{\textsuperscript{\(\dagger\)}}
\newcommand{\correspondingauthor}{\textsuperscript{*}}

\newcommand{\ExperimentSeed}{20260508}
\newcommand{\QualityInstances}{250}
\newcommand{\CompilerInfo}{C++17, g++ -O3}

\newcommand{\ValidationChecked}{60}
\newcommand{\ValidationFailed}{0}

\begin{document}

\title{Label Cuts in Layered Automata: Short Explanations for the Regular Constraint}
\titlerunning{Label-Cut Explanations for \Reg}

\author{XinYi Zhu\orcidID{0009-0009-0334-0194}\inst{1}\equalcontrib\and
Zonglin Yang\orcidID{0009-0000-5732-5699}\inst{2}\equalcontrib\correspondingauthor}
\authorrunning{X. Zhu and Z. Yang}
\institute{Computer Science and Engineering, Southern University of Science and Technology, China\\
\email{12412254@mail.sustech.edu.cn}
\and
Criminal Science and Technology, Guangdong Police College, China\\
\email{3258244847@qq.com}\\[0.35em]
\textsuperscript{\(\dagger\)}\textit{These authors contributed equally.}\quad
\textsuperscript{*}\textit{Corresponding author.}}

\maketitle

\begin{abstract}
Lazy clause generation solvers learn from propagation explanations; the form of a \Reg\ explanation therefore determines which clauses reach the learning engine.
Standard decompositions introduce variables for automaton states and derive reasons from local table constraints. BDD and MDD propagators instead select individual diagram edges.
We study the direct LCG explanation language for \Reg: disequality literals of the form $X_i\neq v$.
Here, one literal removes every automaton transition with the same position/value label, so an explanation is not an ordinary edge cut.
We prove that explanations for value deletions are exactly label $s$--$t$ cuts in the layered DFA unfolding.
This characterization establishes NP-completeness for minimum-weight explanations, gives a polynomial algorithm for inclusion-minimal explanations, and leads to an exact $O((U+n)4^{|\States|})$ dynamic program for small automata.
A C++ benchmark compares label-cut explanations with Table-LCG and MDD-edge baselines on \QualityInstances{} generated instances.
On average, minimal label cuts reduce projected explanation size from 22.2 to 5.5 literals and run in tens of microseconds.
Minimal label cuts are therefore the practical default; the exact algorithm provides an oracle for small automata.
\keywords{Constraint programming \and Lazy clause generation \and Regular constraint \and Explanations \and Automata \and Label cuts}
\end{abstract}

\section{Introduction}

Lazy clause generation (LCG) combines propagation over finite domains with conflict-driven clause learning.
Whenever a propagator removes a value, it must provide an explanation: a set of already true literals that imply the removal.
The solver then records the implication as a clause and uses it for propagation and backjumping \cite{ohrimenko2007propagation,feydy2009lazy,katsirelos2005generalized,bacchus2007gac}.
Reason length is not the only concern in every solver state, but shorter explanations reduce the immediate reason passed to the learning engine.

The \Reg\ constraint is a standard way to model sequences accepted by a deterministic finite automaton (DFA) \cite{pesant2004regular,rossi2006handbook}.
Domain consistency filtering is well understood: a layered unfolding of the automaton identifies which pairs of positions and values participate in an accepting run.
Explanations are subtler.
One common model decomposes \Reg\ into state variables and ternary table constraints over $(S_{i-1},X_i,S_i)$ \cite{ohrimenko2008modelling,nethercote2007minizinc}. Diagram propagators can instead select BDD or MDD edges \cite{subbarayan2008efficient,gange2011mdd}.
Those explanation languages are local to states or edges.
This paper instead asks what can be achieved in the direct domain-literal language used by a \Reg\ propagator: literals $(X_i\neq v)$.

Domain literals couple the costs across automaton states.
If a literal $(X_i\neq v)$ is used in an explanation, every transition at layer $i$ labeled by value $v$ disappears at once.
Thus a \Reg\ explanation is not an edge cut in the layered graph; it is a cut over labels, where deleting one label can remove many edges.
This cost sharing makes the optimization difficult, but it can also produce short explanations.

We define label $s$--$t$ cuts for the conditioned layered DFA and prove that they coincide with LCG explanations for value deletions.
We then prove that finding a minimum-weight explanation is NP-complete even with unit weights and fixed sequence length, whereas a greedy deletion check computes an inclusion-minimal explanation in polynomial time.
For small automata, we give an exact dynamic program with running time $O((U+n)4^{|\States|})$, where $U$ is the number of removed domain values.
A C++ implementation compares these algorithms with Table-LCG, MDD-edge, reachability-greedy, and post-minimized baselines, as well as assignment propagation.

The experiments isolate explanation generation in a C++ benchmark; this version does not include integration with Chuffed.
We compare directly with the standard decomposition and diagram explanation languages.
Across \QualityInstances{} explanation instances generated with seed \ExperimentSeed, minimal label cuts reduce projected domain-literal explanations by roughly 75\% relative to Table-LCG, and the C++ extractor runs in microseconds to low milliseconds.
The exact algorithm provides an oracle for small automata. Minimal label cuts remain the practical default.

\section{Background and Related Work}

\paragraph{The \Reg\ constraint.}
Let $\DFA=(\States,\Alphabet,\Trans,\Init,\Acc)$ be a DFA, where $\Trans:\States\times\Alphabet\to\States$ may be partial.
The constraint $\Reg(X_1,\ldots,X_n,\DFA)$ holds when the word $x_1\cdots x_n$ labels an accepting run from $\Init$ to a state in $\Acc$.
Pesant's filtering algorithm builds a layered graph with vertices $(i,q)$ for layer $i$ and automaton state $q$; forward and backward reachability identify all supported pairs of positions and values \cite{pesant2004regular}.
We reuse this unfolding to determine which previous deletions explain an unsupported value.

\paragraph{Lazy clause generation explanations.}
LCG solvers combine CP propagation with SAT learning \cite{ohrimenko2007propagation,feydy2009lazy}.
For a derived deletion $(X_k\neq a)$, an explanation is a set of already true literals $E$ such that the constraint and $E$ entail the deletion.
We restrict attention to disequality literals $(X_i\neq v)$, not because LCG contains only such literals, but because this is the direct domain explanation language for the \Reg\ propagator.
Weights model solver preferences such as activity or clause management cost; unit weights recover shortest explanations.

\paragraph{Clause and proof minimization context.}
CDCL solvers also simplify learned clauses through implication graph analysis, recursive literal minimization, and clause quality heuristics \cite{marques1999grasp,een2003extensible,audemard2009predicting}.
Those techniques operate after a propagator has emitted a reason.
Our problem is the complementary one: generate a structurally correct reason inside the \Reg\ propagator before clause minimization at the solver level sees it.
The distinction also applies to hybrid optimization systems in which CP, SAT, and integer programming components exchange explanations in different native languages \cite{hooker2011logic}.

\paragraph{Table decomposition.}
A common decomposition introduces state variables $S_i$ and ternary table constraints over consecutive triples.
This gives a fixed modeling cost and reuses table explanations.
The explanation language, however, contains local state/table literals.
When projected back to original domain literals, the resulting reasons can include many values that are locally relevant somewhere but not globally necessary.
Our experiments therefore report both raw Table-LCG reason atoms and their projection to domain literals.

\paragraph{BDD and MDD explanations.}
BDD and MDD propagators naturally describe explanations in terms of edges.
Subbarayan studied nogood reasoning for BDDs \cite{subbarayan2008efficient}, and Gange et al. developed MDD propagators with explanations, including minimal explanation maintenance \cite{gange2011mdd}.
MDDs are closely related to layered automata and have been used for compact constraint propagation \cite{hoda2010systematic,perez2014improving}.
The two methods optimize different objects.
An MDD edge explanation removes one diagram arc for each selected edge.
In a \Reg\ domain-literal explanation, selecting $(X_i\neq v)$ removes all layer-$i$ arcs labeled $v$.
The label-cut problem studied here is therefore not the same as an edge-cut problem, even though both live on layered DAGs.

\paragraph{Label and colored cuts.}
General minimum label $s$--$t$ cut, also called colored cut, asks for a minimum set of edge labels whose deletion disconnects $s$ and $t$.
It is a known NP-hard group deletion problem with approximation, hardness, and parameterized results \cite{zhang2011approximation,zhang2016label,morawietz2022refined}.
Our hardness theorem is therefore not a claim that label cut is new.
It shows that hardness persists under the structure imposed by \Reg: a layered acyclic graph with vertices $(i,q)$, labels confined to one layer and value, and target or current-domain edges that cannot be removed.
Existing label-cut algorithms parameterized by path length, label frequency, or structural graph distance do not yield the exact $|\States|$-parameterized dynamic program below. In \Reg\ instances, path length is tied to the sequence length $n$, whereas the useful small parameter is the number of DFA states.
Equivalently, the problem is a hitting set over all accepting $s$--$t$ paths, represented implicitly by the DFA.

\paragraph{Global constraints defined by automata.}
Automata and grammar constraints are a standard CP tool for sequence modeling \cite{pesant2004regular,quimper2006global,hopcroft2001introduction}.
We add an explanation layer to an existing \Reg\ filtering event rather than a new filtering algorithm.
The algorithms reuse the layered graph already maintained by the filter.

\section{Explanations as Label Cuts}

At a propagation state, each variable has an initial domain $D_i^0$ and a current domain $D_i\subseteq D_i^0$.
Let $U_i=D_i^0\setminus D_i$ be the values removed so far.
For a target deletion $(X_k\neq a)$, the allowed explanation labels are
\[
  \calL=\{(i,v): i\in\{1,\ldots,n\},\ v\in U_i\}.
\]
An explanation is a subset $E\subseteq\calL$ satisfying
\[
\Reg(X_1,\ldots,X_n,\DFA)\wedge \bigwedge_{(i,v)\in E}(X_i\neq v)
  \models (X_k\neq a).
\]
The label at the target position is not needed: to explain why $a$ is removed from $X_k$, we reason about hypothetical supports with $X_k=a$.

\begin{definition}[Conditioned layered unfolding]
The conditioned graph $G_{k,a}$ has vertices $(i,q)$ for $i=0,\ldots,n$ and $q\in\States$, plus a sink $t$.
Its source is $s=(0,\Init)$.
For every layer $i\neq k$, state $q$, and value $v\in D_i^0$, the graph contains an edge $((i-1,q),(i,\Trans(q,v)))$ labeled $(i,v)$ whenever the transition is defined.
At layer $k$, only edges labeled $(k,a)$ are kept.
Every accepting vertex $(n,f)$, $f\in\Acc$, has an edge to $t$ with a special label that cannot be removed.
\end{definition}

\begin{definition}[Label $s$--$t$ cut]
A set $E\subseteq\calL$ is a label $s$--$t$ cut in $G_{k,a}$ if deleting every edge whose label belongs to $E$ makes $t$ unreachable from $s$.
Deleting a label removes all edges labeled by the same position and value.
\end{definition}

\begin{theorem}[Label-cut equivalence]
\label{thm:labelcut}
For every $E\subseteq\calL$, $E$ is an explanation for $(X_k\neq a)$ if and only if $E$ is a label $s$--$t$ cut in $G_{k,a}$.
\end{theorem}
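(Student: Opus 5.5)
The proof sets up a bijection between complete assignments with $x_k=a$ that respect the $\neq$-literals of $E$ and $s$--$t$ paths in $G_{k,a}$ that avoid labels of $E$. Both directions of the theorem then follow by contraposition.

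The entailment $\Reg\wedge\bigwedge_{(i,v)\in E}(X_i\neq v)\models(X_k\neq a)$ fails exactly when there is a witness: an assignment $x=(x_1,\ldots,x_n)$ with $x_i\in D_i^0$, $x_k=a$, the word $x_1\cdots x_n$ accepted by $\DFA$, and $(i,x_i)\notin E$ for all $i$. Here we read the semantics over initial domains $D_i^0$, which is what entailment of the clause means: the literals in $E$ are the only restrictions assumed. It therefore suffices to show that such witnesses exist iff $t$ is reachable from $s$ after deleting all edges with labels in $E$.

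\emph{Witness $\Rightarrow$ path.} Take an accepting witness $x$ with run $q_0=\Init,q_1,\ldots,q_n$, where $q_i=\Trans(q_{i-1},x_i)$ is defined and $q_n\in\Acc$. For each $i\neq k$, since $x_i\in D_i^0$, the edge $((i-1,q_{i-1}),(i,q_i))$ labeled $(i,x_i)$ exists in $G_{k,a}$. At layer $k$ the edge is labeled $(k,a)$ and is kept by construction. The final edge $(n,q_n)\to t$ exists because $q_n\in\Acc$ and carries the non-removable label. No label $(i,x_i)$ lies in $E$, so the path survives the deletion.

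\emph{Path $\Rightarrow$ witness.} Conversely, every $s$--$t$ path in the layered DAG has the form $(0,\Init),(1,q_1),\ldots,(n,q_n),t$, one edge per layer, because edges only go from layer $i-1$ to layer $i$ and $t$ is entered only from accepting vertices of layer $n$. Read off $x_i$ as the value in the label of the $i$-th edge. Then $x_k=a$, each $x_i\in D_i^0$, the sequence $q_i$ is the run of $\DFA$ on $x$, and $q_n\in\Acc$. Since the path survives, $(i,x_i)\notin E$ for every $i$, so $x$ is a witness.

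Combining the two directions: $E$ is an explanation iff no witness exists iff no surviving path exists iff $E$ is a label cut. The only subtle point is the role of the target label $(k,a)$. It is not in $\calL$, because $a$ is still being deleted and so $a\notin U_k$; even if it were, its edges are the ones we condition on. A path crossing layer $k$ therefore uses $(k,a)$, and an explanation can never cut it there. I expect the main care to go into this bookkeeping and into fixing the semantics of entailment relative to $D_i^0$; the combinatorial content itself is routine.
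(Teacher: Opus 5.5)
Your proof is correct and is essentially the paper's argument: surviving $s$--$t$ paths in $G_{k,a}$ correspond exactly to accepting runs with $x_k=a$ that satisfy every literal of $E$, and each direction then follows by contraposition. Your explicit reading of entailment over the initial domains $D_i^0$ is a useful clarification that the paper leaves implicit. One closing aside is slightly off but harmless: if $(k,a)$ were in $E$, deleting it would cut every path at layer $k$, which matches the fact that such an $E$ is trivially an explanation, and your bijection already covers this case without the aside.
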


\begin{proof}
Every $s$--$t$ path in $G_{k,a}$ is an accepting DFA run with $X_k=a$, and every accepting run with $X_k=a$ traces such a path.
If $t$ is reachable after deleting labels in $E$, the corresponding assignment satisfies \Reg\ and all literals in $E$ while falsifying $(X_k\neq a)$; hence $E$ is not an explanation.
Conversely, if $E$ is not an explanation, there is an assignment satisfying \Reg\ and all literals in $E$ with $X_k=a$.
Its accepting run survives in $G_{k,a}\setminus E$, so $t$ is reachable.
\end{proof}

The theorem also separates the \Reg\ explanation problem from edge-cut results and explanations produced by table decompositions.
An edge cut pays separately for each edge; a Table-LCG reason uses local state/table atoms; a direct domain-literal reason pays once for every equal label at a layer.
Figure~\ref{fig:layered-unfolding} compares these three explanation languages and shows the modeling pitfall that would arise if one tried to merge literal nodes across states.
The phantom path panel warns against an incorrect reduction; it makes no claim about prior MDD implementations.

\begin{figure}[!htbp]
  \centering
  \includegraphics[width=\textwidth]{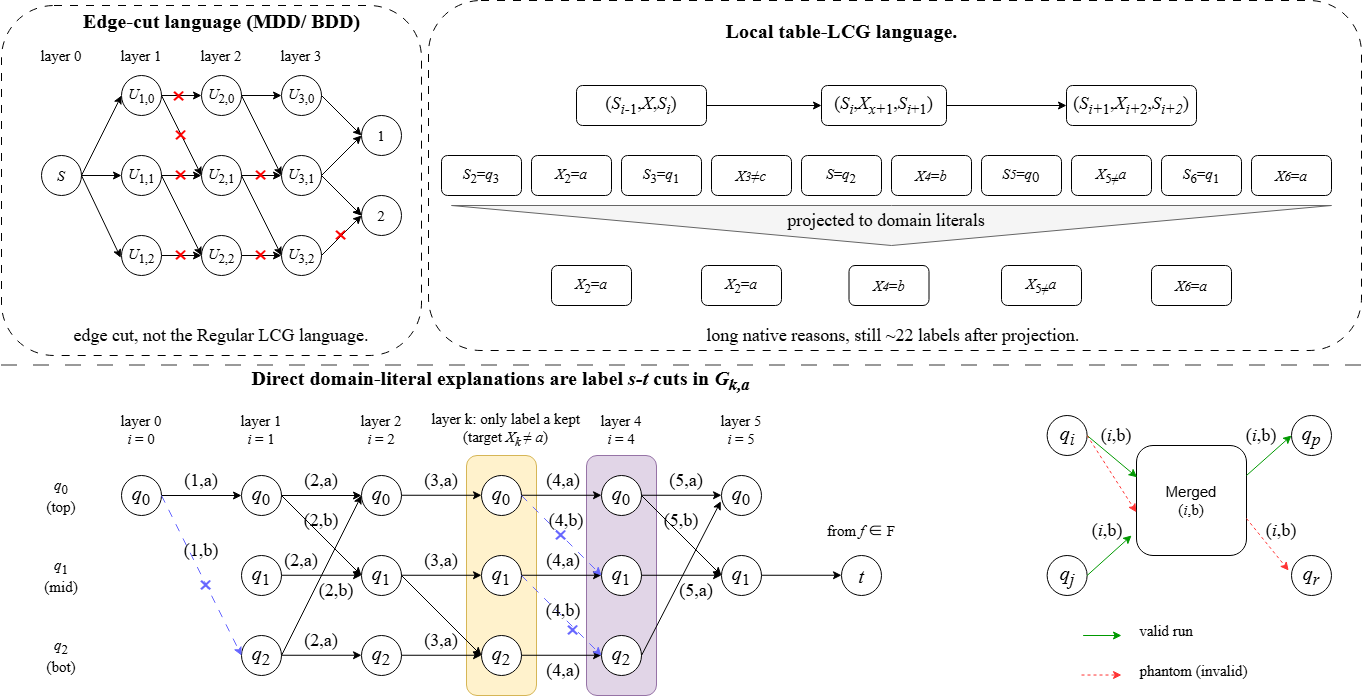}
  \caption{Three explanation languages for \Reg. Top left: MDD/BDD edge cuts pay per arc. Top right: Table-LCG local state reasons remain long, averaging about 22 labels in our benchmark even after projection. Bottom left: direct domain-literal explanations are label $s$--$t$ cuts in $G_{k,a}$; the literal $(4,b)$ removes every transition at layer 4 with that label. Bottom right: merging literal nodes across states can create invalid phantom paths, so Theorem~\ref{thm:labelcut} applies to labeled edges rather than collapsed literal nodes.}
  \label{fig:layered-unfolding}
\end{figure}

\section{Algorithms}

\subsection{Minimum and minimal explanations}

A minimum-weight explanation minimizes $\sum_{(i,v)\in E}\wgt_{i,v}$ over all label cuts.
Computing one is NP-hard.
The theorem below establishes hardness for direct \Reg\ explanations under restricted domains, complementing the general label-cut hardness results cited above.

\begin{theorem}
\label{thm:np}
Given a \Reg\ deletion, deciding whether an explanation of weight at most $B$ exists is NP-complete.
Hardness holds even with unit weights and $n=4$.
\end{theorem}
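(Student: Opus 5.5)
Membership in NP follows from Theorem~\ref{thm:labelcut}. A certificate is a set $E\subseteq\calL$ of weight at most $B$. To check it, we delete all edges of $G_{k,a}$ whose labels lie in $E$ and test whether $t$ is reachable from $s$ by forward search. The graph has $O(n|\States||\Alphabet|)$ edges, so the check is polynomial in the size of the \Reg\ instance. By the theorem, $E$ is an explanation exactly when $t$ is unreachable.

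For hardness I would reduce from Vertex Cover, which is NP-complete with unit weights. The idea is to make every edge of the input graph one accepting path of fixed length that must be hit. Let $H=(V,F)$ with $V=\{1,\dots,m\}$. Put $n=4$ and choose $k=1$ as the target position with target value $a$, so layer $1$ keeps only the label $(1,a)$. We want the remaining three layers to encode $H$ as follows.

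\begin{itemize}
\item Layer 2 ``selects'' a graph edge $e=\{u,w\}$ with $u<w$. Its values are the edge identifiers $e\in F$, and these values stay in the current domain, so they are not removable. From the single state reached after $(1,a)$, value $e$ leads to a state $p_e$.
\item Layer 3 carries the vertex values $V$, all of which are in $U_3$ and hence removable. From $p_e$, value $u$ leads to an accepting-track state $r$, and value $w$ leads to the same state $r$.
\item Layer 4 has one value $z$ that stays in the domain, with $r\xrightarrow{z} f\in\Acc$.
\end{itemize}

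Under this construction, the accepting runs with $X_1=a$ are exactly the words $a\,e\,v\,z$ with $v\in e$. These runs are blocked only by labels $(3,v)$, since the labels at layers 2 and 4 are not removable (their values remain in $D_i$) and the target position has no label. Hence a set $\{(3,v):v\in C\}$ is a label cut if and only if every edge $e$ has an endpoint in $C$, that is, $C$ is a vertex cover. Set $B$ equal to the vertex-cover bound. The DFA has $O(|F|)$ states, and $n=4$ with unit weights, which gives the claimed restriction.

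Two consistency points need care. First, the deletion $(X_1\neq a)$ must actually be a propagated deletion. For this, each current domain must leave $a$ unsupported, which holds because the current layer-3 domain $D_3$ is empty or contains values that have no transitions. I would add a fresh value $y\in D_3$ with no outgoing transitions, so that domains are nonempty. Second, $D_1^0$ must contain $a$, and other layer-1 values are irrelevant because layer 1 is conditioned to $a$.

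The main obstacle is precisely the bookkeeping of which labels are removable, that is, lie in $U_i$. If that fails, edges at the unremovable layers could be cut more cheaply. Taking the layer-2 and layer-4 values in the current domain handles this, since such labels are not in $\calL$ at all.
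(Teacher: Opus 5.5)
Your proof is correct. The membership argument matches the paper's, but your hardness reduction takes a different route.

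The paper reduces from vertex cover in 3-uniform 3-partite hypergraphs. It uses the trie DFA for $\{cuvw:(u,v,w)\in\mathcal{H}\}$ and places the vertices of $V_1,V_2,V_3$ as removed values at layers $2,3,4$. Every current domain after the target is a single dead dummy, so each explanation label is a vertex and blocking all words $cuvw$ is literally covering all hyperedges.

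You instead reduce from ordinary graph Vertex Cover. You use a non-removable ``selector'' layer (edge identifiers kept in $D_2$) and one removable layer, where the label $(3,v)$ sits on the outgoing edges of every state $p_e$ with $v\in e$. Hardness then comes entirely from one literal deleting all same-labelled transitions across different states, which is exactly the phenomenon the paper stresses.

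What your route buys:
\begin{itemize}
\item a textbook source problem and a smaller, more transparent gadget;
\item hardness already at $n=3$, by dropping your padding layer $4$;
\item a verbatim extension to Hitting Set, by letting the selectors range over arbitrary sets, which is cost-preserving, so Hitting Set inapproximability transfers too.
\end{itemize}
The paper's construction instead keeps every informative value removed. The hard instance is then one where the only live values in the current domains are the target value and dead dummies.

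Two small bookkeeping points are worth stating explicitly:
\begin{itemize}
\item Take $a\in D_1$, e.g.\ $D_1^0=D_1=\{a\}$. Then $(1,a)\notin\calL$, and $\calL$ is exactly $\{(3,v):v\in V\}$.
\item Keep $D_2$ nonempty when $F=\emptyset$, either by handling that trivial case separately or by adding a dead dummy to $D_2$.
\end{itemize}
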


\begin{proof}[Proof sketch]
Membership in NP follows from Theorem~\ref{thm:labelcut}.
For hardness, reduce from vertex cover in 3-uniform 3-partite hypergraphs \cite{guruswami2015inapproximability}.
Given hyperedges $\mathcal{H}\subseteq V_1\times V_2\times V_3$, build an automaton over words of length four that accepts exactly $cuvw$ when $(u,v,w)\in\mathcal{H}$.
Allowed explanation labels correspond to vertices in the three parts.
Choosing labels that block every accepting word is exactly choosing a vertex cover.
Appendix~\ref{app:np-proof} gives the full construction, including $D_i^0$, $D_i$, the target deletion, unit weights, and a proof in both directions that preserves the budget.
\end{proof}

An inclusion-minimal explanation can instead be computed greedily.
Start with all allowed labels except those at the target position.
Then scan labels in descending weight order and remove a label whenever the remaining labels still disconnect $s$ and $t$.
Each check is a graph reachability query in $G_{k,a}$.
The result is inclusion-minimal, because no remaining label can be removed without reconnecting the sink.
The scan order selects one of the possible inclusion-minimal explanations; minimality itself does not depend on descending weight order.

\subsection{Exact dynamic program for small automata}

The exact algorithm uses reachable state subsets.
For each layer $i$, define mandatory values
\[
A_i=\begin{cases}
\{a\}, & i=k,\\
D_i, & i\neq k,
\end{cases}
\quad
U_i=\begin{cases}
\emptyset, & i=k,\\
D_i^0\setminus D_i, & i\neq k.
\end{cases}
\]
For $S\subseteq\States$ and value $v$, let
$\img_S(v)=\{\Trans(q,v):q\in S,\ \Trans(q,v)\text{ defined}\}$.
The DP state $\mathrm{DP}[i,S]$ stores the minimum deletion weight over the first $i$ layers that leaves exactly the reachable state set $S$.
The implementation precomputes all images $\img_S(v)$ for $S\subseteq\States$ and $v\in\Alphabet$ using bitsets in $O(|\Alphabet|2^{|\States|}|\States|)$ time.

For a transition from $S$ to a target set $T$, all mandatory images must be included:
$M=\bigcup_{v\in A_i}\img_S(v)\subseteq T$.
Because weights are positive, the optimal set of optional values retained for $T$ is
\[
K_i^\star(S,T)=\{v\in U_i:\img_S(v)\subseteq T\}.
\]
The transition is feasible iff
$M\cup\bigcup_{v\in K_i^\star(S,T)}\img_S(v)=T$.
Its cost is the weight of deleted optional values,
$\sum_{v\in U_i\setminus K_i^\star(S,T)}\wgt_{i,v}$.
The final set must avoid accepting states.

\begin{algorithm}[t]
\caption{Exact minimum-weight label-cut explanation}
\label{alg:exact}
\small
\begin{algorithmic}[1]
\State $\mathrm{DP}[0,\{\Init\}]\gets 0$ and $\mathrm{DP}[0,S]\gets \INF$ for all other $S$
\For{$i=1$ to $n$}
  \ForAll{$S,T\subseteq\States$ with $\mathrm{DP}[i-1,S]<\INF$}
    \State $M\gets\bigcup_{v\in A_i}\img_S(v)$
    \State $K_i^\star(S,T)\gets\{v\in U_i:\img_S(v)\subseteq T\}$
    \If{$M\subseteq T$ and $M\cup\bigcup_{v\in K_i^\star(S,T)}\img_S(v)=T$}
      \State update $\mathrm{DP}[i,T]$ with cost $\sum_{v\in U_i\setminus K_i^\star(S,T)}\wgt_{i,v}$
    \EndIf
  \EndFor
\EndFor
\State return the cheapest $\mathrm{DP}[n,S]$ with $S\cap\Acc=\emptyset$ and reconstruct deleted labels
\end{algorithmic}
\end{algorithm}

\begin{theorem}
Algorithm~\ref{alg:exact} returns a minimum-weight explanation in $O((U+n)4^{|\States|})$ time after the image precomputation above, and $O(n2^{|\States|})$ space with backpointers, where $U=\sum_i|U_i|$.
\end{theorem}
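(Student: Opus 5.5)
The proof reduces correctness to Theorem~\ref{thm:labelcut}. After that, the work is to show that the subset DP optimizes exactly over label cuts, and then to count operations.

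\textbf{Reachable sets and layers.} Fix a candidate explanation $E\subseteq\calL$ with no label at position $k$. Define $R_0=\{\Init\}$ and
\[
R_i=\bigcup_{v\in A_i\cup (U_i\setminus E_i)}\img_{R_{i-1}}(v),
\]
where $E_i=\{v:(i,v)\in E\}$. A straightforward induction on $i$ shows that $R_i$ is exactly the set of states $q$ such that $(i,q)$ is reachable from $s$ in $G_{k,a}$ after deleting the labels in $E$. At $i=k$ only $a$ is kept, which matches $A_k=\{a\}$ and $U_k=\emptyset$. At $i\neq k$ the surviving labels are $D_i\cup(U_i\setminus E_i)$, since labels in $D_i$ are not in $\calL$ and cannot be deleted. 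By Theorem~\ref{thm:labelcut}, $E$ is an explanation iff $R_n\cap\Acc=\emptyset$. Positive weights mean labels outside $\calL$ never need to be considered.

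\textbf{Soundness.} I will show by induction that every finite value $\mathrm{DP}[i,T]$ is witnessed by a deletion set over layers $1..i$ of exactly that cost whose reachable set is exactly $T$. At each layer the witness deletes $U_i\setminus K_i^\star(S,T)$. The feasibility test says that the images of the kept values union to $T$, so the reachable set is $T$. Hence the returned set is a valid explanation of the stated weight.

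\textbf{Optimality.} Take an optimal explanation $E^\star$ with reachable sets $R_i^\star$. I will show $\mathrm{DP}[i,R_i^\star]\le w(E^\star_{\le i})$ by induction. At layer $i$, every value kept by $E^\star$ satisfies $\img_{R_{i-1}^\star}(v)\subseteq R_i^\star$, so the kept set is contained in $K_i^\star(R_{i-1}^\star,R_i^\star)$. Since $K^\star$ itself only contributes images inside $R_i^\star$, and both $M$ and the values kept by $E^\star$ already produce $R_i^\star$, the feasibility equality holds. The cost of $K^\star$ is no larger, because it deletes a subset of what $E^\star$ deletes.

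This exchange step is the main point to check carefully. What makes it work is that enlarging the kept set to $K^\star$ cannot change the reachable set $T$, so the later layers are unaffected. The final minimization over $S$ with $S\cap\Acc=\emptyset$ then gives a value at most $w(E^\star)$. Backpointers recover the deleted labels.

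\textbf{Complexity.} There are $n$ layers and at most $4^{|\States|}$ pairs $(S,T)$ per layer. For each pair, computing $K^\star$, the feasibility union and the cost takes $O(|U_i|)$ bitset operations using the precomputed images. I treat $O(2^{|\States|})$-bit unions, and the union $M$, as constant-time word operations; $M$ depends only on $S$ and can be computed once per $S$, outside the loop over $T$.

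Each pair therefore costs $O(1+|U_i|)$, and summing over layers gives $O\bigl(\sum_i(1+|U_i|)\,4^{|\States|}\bigr)=O((U+n)4^{|\States|})$.

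For space, the table has $(n+1)2^{|\States|}$ entries, each with a backpointer to its predecessor $S$. The deleted set at a layer is recomputed from $(S,T)$ during reconstruction, so the total space is $O(n2^{|\States|})$.
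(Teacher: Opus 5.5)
Your proposal is correct and takes essentially the same route as the paper: a DP over exact reachable subsets whose correctness rests on $K_i^\star(S,T)$ being the cheapest retention set that reaches exactly $T$ (your exchange argument spells out the paper's one-line claim), followed by the same count of $4^{|\States|}$ pairs per layer with $O(1+|U_i|)$ work each and backpointers for reconstruction. One small slip: the sets you union are subsets of $\States$, i.e.\ $|\States|$-bit masks, so the constant-time assumption should be for $|\States|$-bit words, not $O(2^{|\States|})$-bit ones.
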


\begin{proof}[Proof sketch]
For fixed $i$, $S$, and $T$, the set $K_i^\star(S,T)$ is the unique set of maximum total weight that can be kept without reaching outside $T$; all other optional values must be deleted.
Induction on $i$ gives the DP invariant that $\mathrm{DP}[i,S]$ is the optimum cost for producing reachable set $S$ after layer $i$.
The final condition $S\cap\Acc=\emptyset$ is exactly disconnection from the accepting sink.
There are $2^{|\States|}$ choices for $S$ and $2^{|\States|}$ choices for $T$ per layer, so the transition loop enumerates $4^{|\States|}$ pairs.
Scanning optional values contributes $\sum_i |U_i|4^{|\States|}=U4^{|\States|}$, while checks of mandatory images and feasibility contribute $n4^{|\States|}$ with precomputed bitset images.
A rolling array stores only $O(2^{|\States|})$ costs; reconstructing the explanation stores one predecessor and one retention decision per layer and subset, giving $O(n2^{|\States|})$ space.
\end{proof}

The implementation computes layer transitions on demand from the precomputed images. Materializing every $(S,T)$ choice would add an $O(U4^{|\States|})$ table.
Meet-in-the-middle methods, symmetry breaking, and dominance pruning remain open implementation choices. Without them, the exact algorithm is intended as an oracle for small automata.

\paragraph{Assignment propagation.}
If \Reg\ entails an assignment $X_k=a$, then it entails every deletion $(X_k\neq b)$ for $b\in D_k^0\setminus\{a\}$.
We explain the assignment by taking the union of the deletion explanations for all such $b$.
The union need not be minimum when equality is a primitive literal, but it is sound and matches our language of deletion literals.

\section{Experimental Evaluation}

\paragraph{Implementation and setup.}
The main experiments use a standalone C++17 implementation compiled with \CompilerInfo.
The random seed is \ExperimentSeed.
All reported instances are generated explanation events for which the reason that includes every eligible literal is valid.
The exact DP was validated against brute force on \ValidationChecked{} small instances, with \ValidationFailed{} failures.
We retain the earlier Python CDCL simulation only as supplementary evidence; every runtime claim in the main paper comes from the C++ implementation.
The benchmark measures explanation generation in isolation rather than full solver performance.
The supplementary artifact contains the C++ implementation and generator, together with the raw CSV files and plotting script used to reproduce the tables and figures.

\paragraph{Baselines.}
\textsc{Naive} uses every available domain literal except literals at the target position.
\textsc{Table-LCG} simulates a decomposition with state variables and ternary table constraints: we report raw local reason atoms and their projection to distinct domain literals.
\textsc{MDD-edge} computes a minimum edge cut in the conditioned layered graph, using individual edges as reason atoms; its projected label count is diagnostic, not its native explanation language.
\textsc{Reachability} uses the BFS heuristic and is reported as the reachability-greedy baseline.
\textsc{Minimal} is the greedy label-deletion algorithm, and \textsc{Optimal} is Algorithm~\ref{alg:exact}.
For post-minimization, the same oracle label-deletion check is applied to baseline outputs.

\begin{figure}[H]
  \centering
  \includegraphics[width=\textwidth]{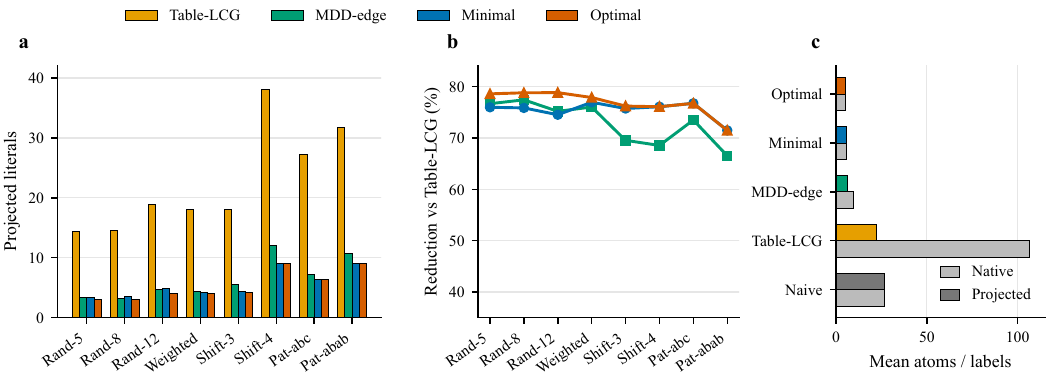}
  \caption{Main explanation quality results from the standalone C++ benchmark.
  (a) Projected domain-literal reason length by instance family.
  (b) Reduction relative to the projected Table-LCG baseline.
  (c) Native reason atoms versus projected domain literals, showing why Table-LCG and MDD-edge explanations must be reported in their own languages as well as after projection.}
  \label{fig:quality-overview}
\end{figure}

\begin{table}[H]
\centering
\caption{Diagnostic comparison of explanation languages, averaged over all C++ benchmark instances.}\label{tab:diagnostic}
\small
\begin{tabular}{lrrr}
\toprule
Method & Native atoms & Projected labels & Time ($\mu$s)\\
\midrule
Naive & 26.5 & 26.5 & 1.4\\
Table-LCG & 106.7 & 22.2 & 16.1\\
MDD-edge & 9.6 & 6.2 & 32.0\\
Reachability & 22.2 & 22.2 & 12.6\\
Minimal & 5.5 & 5.5 & 25.6\\
Optimal & 5.3 & 5.3 & 325.5\\
\bottomrule
\end{tabular}
\end{table}

Figure~\ref{fig:quality-overview} and Table~\ref{tab:diagnostic} separate the explanation languages.
Table-LCG has many raw local atoms and, after projection, averages 22.2 domain labels.
MDD-edge explanations are much shorter in their native edge language, but their projected labels remain a different object from a label-cut optimum.
Minimal label cuts average 5.5 labels, close to the 5.3-label exact optimum.

\begin{table}[!tbp]
\centering
\caption{C++ explanation quality. Native atoms count the reason atoms in each method's own language; projected labels count distinct domain literals $(X_i\neq v)$. Reduction is relative to Table-LCG projected labels.}\label{tab:quality-main}
\scriptsize
\setlength{\tabcolsep}{3.2pt}
\begin{tabular}{llrrrr}
\toprule
Family & Method & Native & Labels & Time ($\mu$s) & Red.\\
\midrule
Random-Q5 & Table-LCG & 64.1 & 14.3 & 9.0 & --\\
 & MDD-edge & 4.3 & 3.4 & 15.1 & 76.7\%\\
 & Reachability & 14.3 & 14.3 & 6.3 & 0.0\%\\
 & Minimal & 3.5 & 3.5 & 12.3 & 76.0\%\\
 & Optimal & 3.1 & 3.1 & 10.4 & 78.6\%\\
\midrule
Random-Q8 & Table-LCG & 98.1 & 14.5 & 16.2 & --\\
 & MDD-edge & 4.5 & 3.3 & 28.8 & 77.4\%\\
 & Reachability & 14.5 & 14.5 & 12.3 & 0.0\%\\
 & Minimal & 3.5 & 3.5 & 25.4 & 75.8\%\\
 & Optimal & 3.1 & 3.1 & 41.6 & 78.8\%\\
\midrule
Random-Q12 & Table-LCG & 183.9 & 18.9 & 23.6 & --\\
 & MDD-edge & 6.9 & 4.7 & 48.1 & 75.2\%\\
 & Reachability & 18.9 & 18.9 & 18.5 & 0.0\%\\
 & Minimal & 4.8 & 4.8 & 37.2 & 74.5\%\\
 & Optimal & 4.0 & 4.0 & 645.6 & 78.8\%\\
\midrule
Weighted-Q7 & Table-LCG & 89.0 & 18.1 & 14.5 & --\\
 & MDD-edge & 5.0 & 4.3 & 25.8 & 76.0\%\\
 & Reachability & 18.1 & 18.1 & 10.8 & 0.0\%\\
 & Minimal & 4.2 & 4.2 & 20.5 & 76.9\%\\
 & Optimal & 4.0 & 4.0 & 26.6 & 77.9\%\\
\midrule
Shift-3x2 & Table-LCG & 96.9 & 18.0 & 14.5 & --\\
 & MDD-edge & 9.1 & 5.5 & 25.7 & 69.5\%\\
 & Reachability & 18.0 & 18.0 & 10.9 & 0.0\%\\
 & Minimal & 4.4 & 4.4 & 20.0 & 75.7\%\\
 & Optimal & 4.3 & 4.3 & 24.3 & 76.2\%\\
\midrule
Shift-4x3 & Table-LCG & 257.2 & 38.1 & 36.2 & --\\
 & MDD-edge & 28.4 & 12.0 & 91.4 & 68.5\%\\
 & Reachability & 38.1 & 38.1 & 32.5 & 0.0\%\\
 & Minimal & 9.1 & 9.1 & 70.8 & 76.1\%\\
 & Optimal & 9.1 & 9.1 & 2781.3 & 76.1\%\\
\midrule
Pattern-abc & Table-LCG & 67.1 & 27.3 & 11.2 & --\\
 & MDD-edge & 9.1 & 7.2 & 19.2 & 73.5\%\\
 & Reachability & 27.3 & 27.3 & 8.3 & 0.0\%\\
 & Minimal & 6.3 & 6.3 & 16.8 & 76.7\%\\
 & Optimal & 6.3 & 6.3 & 8.1 & 76.7\%\\
\midrule
Pattern-abab & Table-LCG & 90.4 & 31.8 & 14.9 & --\\
 & MDD-edge & 14.3 & 10.7 & 31.1 & 66.5\%\\
 & Reachability & 31.8 & 31.8 & 11.3 & 0.0\%\\
 & Minimal & 9.1 & 9.1 & 23.0 & 71.5\%\\
 & Optimal & 9.1 & 9.1 & 14.4 & 71.5\%\\
\midrule
\bottomrule
\end{tabular}
\end{table}

Table~\ref{tab:quality-main} reports results by family.
Minimal label cuts reduce projected explanation size by 71.5--76.9\% relative to Table-LCG across all families.
The exact algorithm improves slightly on random and weighted instances and matches Minimal on the structured shift and pattern families.
Minimal is therefore the practical default; Optimal is better suited to small automata and offline analysis.

\begin{table}[!tbp]
\centering
\caption{Oracle post-minimization of baseline explanations. Total time includes producing the baseline reason and then applying the label-cut deletion check.}\label{tab:postmin}
\small
\begin{tabular}{lrrrrr}
\toprule
Method & Before & After & Gap & Total ($\mu$s) & Direct Minimal\\
\midrule
Naive+Min & 26.5 & 5.5 & 0.2 & 37.0 & 25.6\\
Table+Min & 22.2 & 5.5 & 0.2 & 48.6 & 25.6\\
Reachability+Min & 22.2 & 5.5 & 0.2 & 44.3 & 25.6\\
\bottomrule
\end{tabular}
\end{table}

Oracle post-minimization closes the gap: starting from Naive, Table-LCG, or Reachability and repeatedly deleting unnecessary labels reaches the same average size as Minimal in this benchmark.
This result identifies the source of the improvement. Strong minimization requires the label-cut reachability test, and Minimal applies that test directly instead of first constructing a long reason.
The oracle baseline is stronger than an ordinary clause minimizer at the solver level, which normally sees the implication graph rather than reachability in the conditioned DFA.
Table~\ref{tab:postmin} also reports total time: oracle post-minimization costs 37.0--48.6 microseconds after constructing the baseline reason, compared with 25.6 microseconds for direct Minimal.

\begin{figure}[t]
  \centering
  \includegraphics[width=\textwidth]{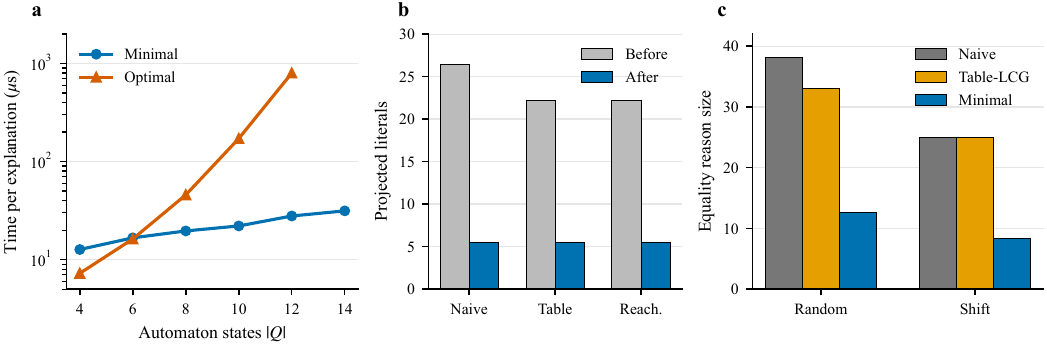}
  \caption{Runtime and extension results.
  (a) C++ runtime as a function of automaton size; Minimal stays in the microsecond range, while Optimal shows the expected exponential dependence on $|\States|$.
  (b) Oracle post-minimization reduces baseline reasons to the same size as Minimal, but first constructs and then minimizes a long reason.
  (c) Assignment propagation explanations, built by unioning deletion reasons for all excluded values.}
  \label{fig:runtime-extensions}
\end{figure}

\begin{table}[!htb]
\centering
\caption{C++ runtime scaling in microseconds per explanation. Optimal is reported only for automata sizes where the exact DP is intended to be used.}\label{tab:runtime}
\scriptsize
\setlength{\tabcolsep}{3.5pt}
\begin{tabular}{llrrrrrr}
\toprule
Axis & Quantity & \multicolumn{6}{c}{Samples}\\
\cmidrule(lr){3-8}
\midrule
$n$ & Setting & 8 & 16 & 32 & 64 & 128 & --\\
    & Minimal & 9.4 & 25.7 & 36.8 & 171.4 & 328.3 & --\\
    & Optimal & 11.4 & 43.9 & 20.0 & 133.2 & 141.9 & --\\
\midrule
$|D|$ & Setting & 3 & 4 & 5 & 6 & 8 & --\\
      & Minimal & 13.4 & 19.2 & 21.6 & 26.7 & 32.8 & --\\
      & Optimal & 16.5 & 33.4 & 44.2 & 22.6 & 19.4 & --\\
\midrule
$|Q|$ & Setting & 4 & 6 & 8 & 10 & 12 & 14\\
      & Minimal & 12.7 & 16.7 & 19.6 & 22.1 & 27.9 & 31.4\\
      & Optimal & 7.2 & 16.2 & 45.8 & 171.8 & 799.2 & --\\
\bottomrule
\end{tabular}
\end{table}

Figure~\ref{fig:runtime-extensions} and Table~\ref{tab:runtime} report C++ runtime scaling.
Minimal runs in a few microseconds to less than one millisecond across the tested values of $n$, $|\Alphabet|$, and $|\States|$.
Optimal grows with $|\States|$, as predicted by the $4^{|\States|}$ term, but remains useful for small automata and for measuring the optimality gap.
In this standalone implementation, the exact oracle is practical for $|\States|\le 14$; Minimal is the default beyond that range.
The non-monotonic rows for $n$ and $|\Alphabet|$ reflect that generated instances differ in the number of removed values and reachable states; the $|\States|$ trend is the main stress test for the exact DP.

\begin{figure}[t]
  \centering
  \includegraphics[width=\textwidth]{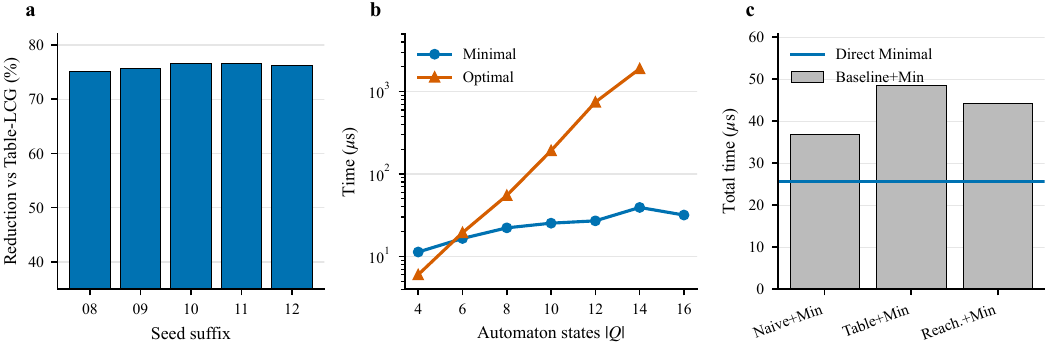}
  \caption{Supplementary stability and boundary checks.
  (a) Minimal's reduction over projected Table-LCG explanations across five seeds, each with 250 generated instances.
  (b) Runtime boundary for the exact oracle; $|Q|=16$ is shown only for Minimal.
  (c) Total time for oracle post-minimization compared with direct Minimal.}
  \label{fig:robustness-boundary}
\end{figure}

Across five seeds, Minimal reduces projected Table-LCG explanations by $76.0\pm0.6\%$ (Fig.~\ref{fig:robustness-boundary}).
The exact DP averages about 1.9 milliseconds at $|Q|=14$ on the boundary benchmark, while Minimal stays in the tens of microseconds.

\begin{table}[H]
\centering
\caption{Explanations for assignment propagation $\Reg\Rightarrow X_k=a$, constructed as the union of reasons for deleting all $b\neq a$.}\label{tab:equality}
\small
\begin{tabular}{llrr}
\toprule
Family & Method & Size & Time ($\mu$s)\\
\midrule
Random-Q7 & Naive & 38.1 & 41.4\\
 & Table-LCG & 33.1 & 90.9\\
 & Minimal & 12.6 & 141.7\\
\midrule
Shift-3x2 & Naive & 25.0 & 22.8\\
 & Table-LCG & 25.0 & 51.5\\
 & Minimal & 8.4 & 72.2\\
\midrule
\bottomrule
\end{tabular}
\end{table}

Table~\ref{tab:equality} reports assignment propagation.
When \Reg\ entails $X_k=a$, unioning deletion reasons for all $b\neq a$ still gives compact explanations.
Minimal reduces equality reasons from 33.1 to 12.6 labels on random instances and from 25.0 to 8.4 labels on shift instances.
Pattern instances rarely generated singleton assignment events under the fixed generator, so they are not reported in this table.

\section{Conclusion}

Direct domain-literal explanations for \Reg\ are label cuts.
Their semantics differ from both BDD/MDD edge explanations and local explanations produced by table decompositions with state variables.
The characterization proves NP-completeness for optimum explanations, yields a polynomial minimal algorithm, and supports an exact DP for small automata.

The C++ benchmark points to a straightforward implementation choice.
Minimal label-cut explanations are short, fast, and match post-minimized baselines without first producing a long reason.
Optimal explanations are best viewed as an oracle or a tool for small automata, not as the default production method.
Integration with full solver pipelines such as Chuffed, OR-Tools CP-SAT, or CP Optimizer remains the next engineering step; the present experiments isolate explanation generation and the standard competing languages.
Future work should evaluate standard CP models that use \Reg\ inside complete search loops and test DP pruning or meet-in-the-middle variants.

\section*{Conflict of Interest}

The authors declare that they have no conflicts of interest.

\FloatBarrier

\bibliographystyle{splncs04}
\bibliography{references}

\appendix
\section{Full Proof of Theorem~\ref{thm:np}}
\label{app:np-proof}

\begingroup
\small
\begin{proof}
\textbf{Membership in NP.}
Given $E\subseteq\calL$, delete from $G_{k,a}$ every edge labelled by a literal in $E$.
By Theorem~\ref{thm:labelcut}, $E$ is an explanation exactly when $t$ is unreachable from $s$.
Because $G_{k,a}$ has $O(n|\States|)$ vertices and $O(n|\States|\max_i|D_i^0|)$ edges, reachability and the budget test $\sum_{(i,v)\in E}\wgt_{i,v}\le B$ are polynomial.

\textbf{NP-hardness.}
We reduce from vertex cover in 3-uniform 3-partite hypergraphs, which is NP-hard~\cite{guruswami2015inapproximability}.
An instance has disjoint parts $V_1,V_2,V_3$, hyperedges $\mathcal{H}\subseteq V_1\times V_2\times V_3$, and a budget $B$.
Introduce fresh symbols $c,d_2,d_3,d_4$, let $\Alphabet=\{c\}\uplus V_1\uplus V_2\uplus V_3\uplus\{d_2,d_3,d_4\}$, and use four variables with
\[
\begin{array}{lll}
D_1^0=\{c\}, & D_2^0=V_1\cup\{d_2\}, &
D_3^0=V_2\cup\{d_3\}, \quad D_4^0=V_3\cup\{d_4\},\\[2mm]
D_1=\{c\}, & D_2=\{d_2\}, &
D_3=\{d_3\}, \quad D_4=\{d_4\}.
\end{array}
\]
Use the trie DFA for $\{cuvw:(u,v,w)\in\mathcal{H}\}$.
It is deterministic, has at most $1+4|\mathcal{H}|$ states, and never accepts a dummy symbol.
Thus the dummies keep the current domains nonempty, while the unit-weight explanation literals for the target deletion $(1,c)$ correspond exactly to the vertices in $V_1\cup V_2\cup V_3$.

\textbf{Forward direction.}
For a vertex cover $C$ with $|C|\le B$, define
\[
E_C=
\{(2,u):u\in C\cap V_1\}
\cup
\{(3,v):v\in C\cap V_2\}
\cup
\{(4,w):w\in C\cap V_3\}.
\]
Every accepted word $cuvw$ represents a hyperedge hit by $C$ and therefore violates a literal in $E_C$.
Hence $E_C$ explains $(X_1\neq c)$ and has weight $|C|\le B$.

\textbf{Reverse direction.}
For an explanation $E$ of weight at most $B$, set
\[
C_E=
\{u\in V_1:(2,u)\in E\}
\cup
\{v\in V_2:(3,v)\in E\}
\cup
\{w\in V_3:(4,w)\in E\}.
\]
If $C_E$ missed a hyperedge $(u,v,w)$, the accepted word $cuvw$ would satisfy every literal in $E$ while keeping $X_1=c$, a contradiction.
Thus $C_E$ is a vertex cover and $|C_E|\le|E|\le B$.
The polynomial reduction uses unit weights and fixed length $n=4$; together with membership in NP, this proves NP-completeness.
\end{proof}
\endgroup

\end{document}